\newtheorem{theorem}{Theorem}%
\begin{document}

\title[The EC-SBM Synthetic Network Generator]{EC-SBM Synthetic Network Generator}

\author[1]{\fnm{The-Anh} \sur{Vu-Le}}\email{vltanh@illinois.edu}

\author[1]{\fnm{Lahari} \sur{Anne}}\email{lanne2@illinois.edu}
\author[1]{\fnm{George} \sur{Chacko}}\email{chackoge@illinois.edu}

\author*[1]{\fnm{Tandy} \sur{Warnow}}\email{warnow@illinois.edu}

\affil*[1]{\orgdiv{Siebel School of Computing and Data Science}, \orgname{University of Illinois Urbana-Champaign}, \orgaddress{\street{201 N. Goodwin Avenue}, \city{Urbana}, \postcode{61801}, \state{IL}, \country{USA}}}

 \abstract{
    Generating high-quality synthetic networks with realistic community structure is vital to effectively evaluate community detection algorithms. In this study, we propose a new synthetic network generator called the Edge-Connected Stochastic Block Model (EC-SBM). The goal of EC-SBM is to take a given clustered real-world network and produce a synthetic network that resembles the clustered real-world network with respect to both network and community-specific criteria. In particular, we focus on simulating the internal edge connectivity of the clusters in the reference clustered network. Our extensive performance study on large real-world networks shows that EC-SBM has high accuracy in both network and community-specific criteria, and is generally more accurate than current alternative approaches for this problem. Furthermore, EC-SBM is fast enough to scale to real-world networks with millions of nodes.
 }

\keywords{synthetic network generation, community detection, network science}

\maketitle

\section{Introduction}
\label{sec:intro}

Models for synthetic network generation are in wide use, with applications that vary from proposing null models \cite{erdos-renyi, chung-lu, modularity, config-model} to evaluating graph analysis techniques such as community detection \cite{lfr, danon2005, orman2011, orman2012, orman2013, hric2014, yang2016, abcd, abcd+o} or community structure testing \cite{lancichinetti2010, bickel2015, miasnikof2020, miasnikof2023, yanchenko-sengupta2024, zengyou2025}.

Several of the early approaches do not incorporate community structure into the generated network. 
The Erdős–Rényi model \cite{erdos-renyi} connects nodes in an independently and identically distributed manner. The Barabási–Albert model \cite{barabasi-albert} captures the power-law degree distribution often observed in real-world networks. The Chung-Lu model \cite{chung-lu} and the configuration model \cite{config-model-first, config-model} enforce specific degree distribution, either in expectation or in reality. These models are usually simple and only focus on a specific property; however, they provide the foundation for much future research. 

Later models have incorporated community structure, including Stochastic Block Models (SBMs) \cite{sbm_first,sbm}, the Lancichinetti–Fortunato–Radicchi (LFR) benchmark \cite{lfr},  Artificial Benchmark for Community Detection (ABCD) \cite{abcd},  and the extension of ABCD called ABCD+o  to handle outliers \cite{abcd+o}, i.e., nodes that are not in any cluster.
The nonuniform Popularity-Similarity Optimization (nPSO) \cite{npso} model incorporates parameters not directly estimated from the reference graph but indirectly influencing other network and cluster properties of interest. 
Many of these approaches can take as input various parameters for a clustered network, such as the degree sequence, the number of edges and vertices in each cluster, the mixing parameter, etc. 
Furthermore, SBMs can take as input the assignment of nodes to clusters. 

More recently, over-parameterized models have emerged, such as the exponential family random graph models (ERGM) \cite{ergm, ergm-digraph}, GraphRNN \cite{graphrnn}, and other deep graph generative models \cite{graphgen_survey, graphgen_survey2}.
These models utilize parameters that are not directly tied to graph properties, like degree or edge counts, but serve as coefficients in complex models. Moreover, these parameters can be ``learned'' from a set of reference networks.  
While these approaches can capture more intricate network properties, including communities, they face challenges in scalability and scarcity of large-network data, limiting their applicability to smaller graphs.
For example, the study done by \cite{geel} showed GraphRNN and other deep generative techniques do not scale to networks with $5000$ nodes where they ran into out-of-memory errors. 

Here we ask how well a given network generator is able to produce synthetic networks that resemble a given real-world network.
This question was partially addressed by Vaca-Ramirez and Peixoto
\cite{eval_sbm}, who examined the \texttt{graph-tool} software \cite{graph-tool} for generating SBMs with respect to its accuracy in reproducing network features.  . 
Vaca-Ramirez and Peixoto \cite{eval_sbm} took a large number of real-world networks, computed a clustering of the network by finding a best-fitting SBM for the network, and then used that SBM to generate synthetic networks. 
That study then established that their SBM software produced synthetic networks that came closer to the real-world network in terms of various properties, such as degree sequence, diameter, local and global clustering coefficients, etc., than a configuration model.
Despite that positive result, \cite{eval_sbm} did not examine whether the SBMs produced by \texttt{graph-tool} had ground-truth communities that were similar to the communities in the clustered real-world network.

Very recently, Anne {\em et al.} \cite{reccs} examined the question of how well the clusters produced by \texttt{graph-tool} resembled the clusters in the input clustered network.
Their study, which examined a large number of networks clustered using the Leiden software \cite{leiden-code} optimizing either modularity or the Constant Potts model \cite{leiden}, established that the SBMs produced by \texttt{graph-tool} often had disconnected ground truth clusters.
Anne {\em et al.} also presented a new simulator to address this problem, REalistic Cluster Connectivity Simulator (RECCS).
Given the parameters from a clustered real-world network, RECCS creates an initial SBM using \texttt{graph-tool} and then modifies it to come close to the parameters in the input.
As shown in \cite{reccs}, RECCS produces synthetic networks and ground truth clusterings that have network and cluster properties similar to that of the input clustered real-world network and is especially good with respect to edge-connectivity (i.e., the size of a minimum edge cut) for each cluster.  Furthermore, in comparison to SBMs computed using \texttt{graph-tool}, RECCS improves the accuracy for edge-connectivity values and matches the network parameter values.

Here, we present a new method, Edge-Connected SBM (EC-SBM), for generating synthetic networks that aims to achieve even higher fidelity than RECCS. 
The design of EC-SBM has a different structure than RECCS in two major ways: while RECCS begins by creating an SBM synthetic network and then modifying it, EC-SBM begins by establishing some of the edges for the clusters that suffice to guarantee achieving edge connectivity that is at least that in the input parameters. 
Then EC-SBM uses \texttt{graph-tool} SBM construction to augment the edge set and afterward further modifies the network to improve the fit to the input values.
Our study, using a large corpus of real-world networks, also examines the question of which clusterings of real-world networks produce the closest fit to the real-world network.
We find distinct differences in final synthetic network quality (as measured using the fit between the synthetic network and the given real-world network) between clustering methods, with two specific clustering methods providing the best fit.
Thus, in this paper, we provide some advances in the development of protocols and software to produce realistic synthetic networks that resemble given real-world networks.

The rest of the paper is organized as follows.
In Section \ref{sec:prelim-study}, we begin with our preliminary study of synthetic networks using SBMs, and provide extensive evidence that SBMs produce disconnected ground truth clusters for many different ways of clustering real-world networks, thus expanding on the results provided in \cite{reccs}.
We also examine the frequency with which SBMs produce parallel edges and self-loops, which enables them to reproduce with high accuracy the degree sequence of the input network. 
We continue in Section \ref{sec:design-ecsbm} with the description of the EC-SBM network simulator, which has three stages.
In Section \ref{sec:results}, we present the experimental study and results, comparing SBM, EC-SBM, and RECCS on a large corpus of clustered real-world networks. In this section, we also evaluate different clustering methods to determine which ones are most suitable for use when seeking to reproduce the features of a clustered real-world network.
Finally, we close in Section \ref{sec:conclusions} with a discussion of future work.
Details on materials and methods are provided in Section \ref{sec:methods-materials}.

\section{Preliminary study}
\label{sec:prelim-study}

In this section, we explore how well SBMs are able to reproduce the features of a real-world network and, in particular, how often the generated ground truth clusters are disconnected.
We also examine the incidence of parallel edges and self-loops.
We begin with a description of the stochastic block model and its formulation within \texttt{graph-tool}.

\subsection{Stochastic Block Model (SBM)}
\label{sec:prior-works/sbm}

The \textbf{stochastic block model} (SBM) \cite{sbm} is a generative model for graphs. It can be used for both community detection and as a synthetic network generator; in this section, we will focus on the latter. Specifically, we base our analysis on the \texttt{generate\_sbm} function of \texttt{graph-tool} \cite{graph-tool}. We set \texttt{micro\_ers} and \texttt{micro\_degs} to \texttt{true} to evoke the micro-canonical version of the degree-corrected model. We set \texttt{directed} to \texttt{false} to generate only undirected graphs. For the rest of the text, we refer to this implementation as SBM.

SBM takes, as input parameters, 
\begin{itemize}
    \item the desired cluster (block) assignment as a vector $b$ of size $n \in \mathbb{N}$ (number of vertices) where each element is an integer from $1$ to $m \in \mathbb{N}$ (number of clusters) indicating to which cluster each vertex belongs
    \item the desired degree sequence as a vector $d \in \mathbb{N}^n$ indicating the degree of each vertex
    \item the desired edge count matrix as a matrix $e$ of size $m \times m$ indicating the number of edges between each pair of clusters with the off-diagonal values and double the number of edges inside each cluster with the values on the main diagonal.
\end{itemize}
SBM will produce a synthetic network with exactly $n$ vertices, all assigned to $m$ clusters exactly according to $b$. Moreover, the degree sequence and edge count matrix of the synthetic network and cluster assignment will be close to the desired (except for some instances discussed in the observation below).

Given a clustered real-world network, we use SBM as a simulator in an obvious way.
A given clustering will produce some clusters with the size of at least two and then some nodes that are not clustered with any other node.
These nodes also can be considered members of clusters, but where the clusters have only one member.
Given this, we can use SBM as a simulator to produce a synthetic network by passing the necessary parameters to SBM using this clustering that includes singleton clusters.
The set of parameters includes the degree sequence, the assignment of nodes to clusters, and the edge count matrix (which specifies the number of edges within each cluster and between every pair of clusters).

Note that this approach ensures that the synthetic cluster assignment exactly matches the desired cluster assignment: there is a one-to-one correspondence between the synthetic and empirical clusters and between the vertices of the synthetic and empirical networks. 

\subsection{Evaluating SBM synthetic networks}

We cluster each network in a corpus of small and medium-sized $74$ real-world networks (see Section \ref{sec:methods-materials/materials}) using different clustering methods to obtain input clusterings.
Some of these clusterings are post-processed using methods such as the Connectivity Modifier (CM) \cite{cm} and Well-Connected Clusters (WCC) \cite{sbm-wcc}, which split clusters into smaller clusters so as to improve their edge-connectivity.
Then, for each network-clustering pair, we compute the parameters and use SBM to generate a synthetic network. Finally, we compute the proportion of disconnected clusters and excess edges (i.e., self-loops and parallel edges) in these synthetic networks. Figure \ref{fig:sbm/disconnect_excess} summarizes the results, where the network-clustering pairs are grouped by the clustering method to produce the clustering.

\begin{figure}[!t]
    \centering
    \includegraphics[width=\linewidth]{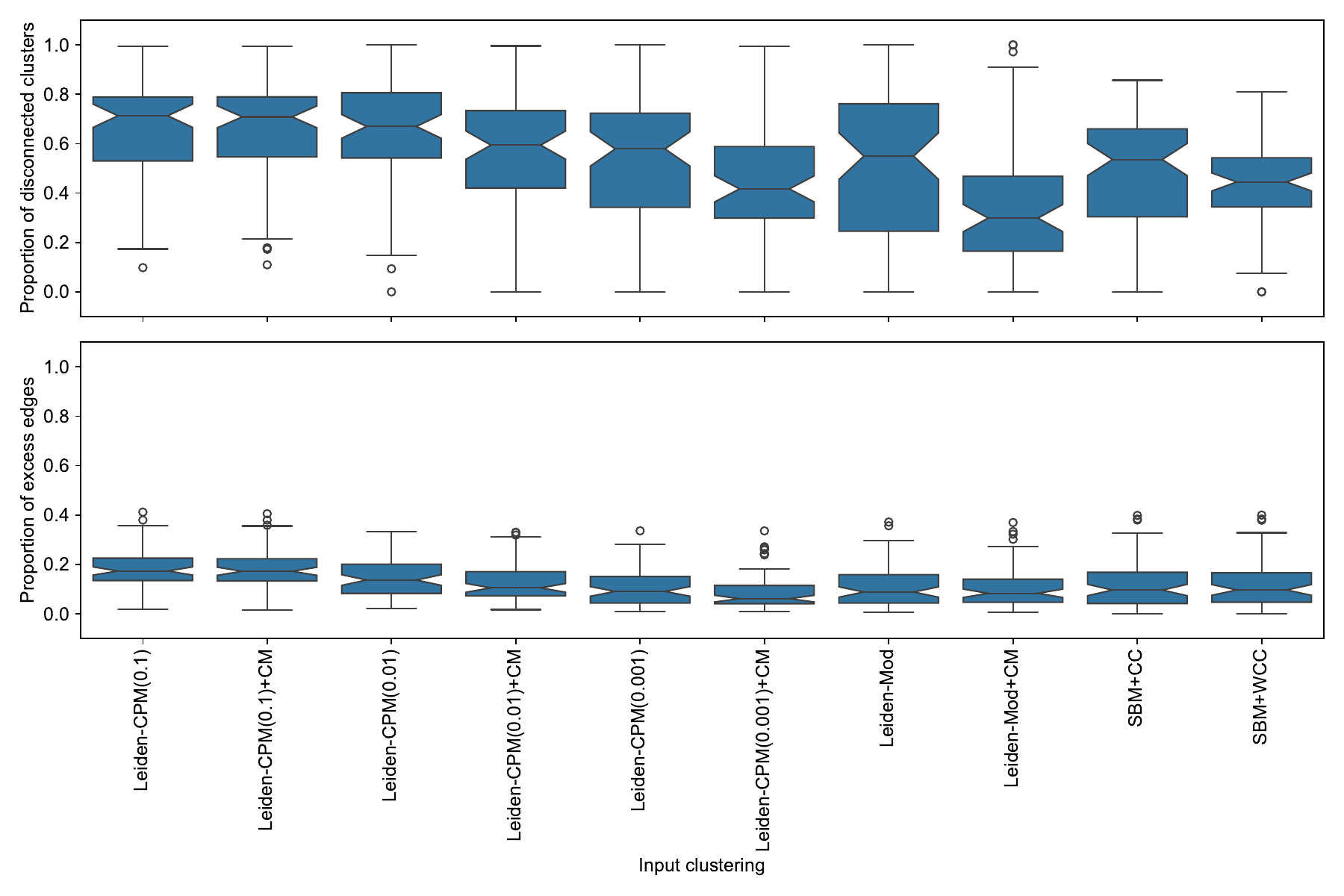}
    \caption{\textbf{Proportion of disconnected clusters (top) and excess edges (bottom) in synthetic networks generated by SBM.} SBM is given network and clustering statistics for $74$ networks, each clustered by one of the clustering methods specified on the horizontal axis. These clusterings are all guaranteed to produce connected clusters. SBM then uses these parameters to produce a synthetic network.
    This figure shows that while the choice of clustering method affects the proportion of clusters that are internally disconnected and the overall proportion of excess edges, SBM produces many disconnected clusters and many excess edges for all tested clusterings. 
    }
    \label{fig:sbm/disconnect_excess}
\end{figure}

From Figure \ref{fig:sbm/disconnect_excess} (top), we first observe that, while the choice of clustering method affects the proportion of clusters that are internally disconnected, the median proportion lies from around $30\%$ to around $70\%$. This indicates that SBM produces many disconnected clusters for all studied input clusterings, even though these clusterings are guaranteed to produce connected clusters. This observation motivates the focus of our approach to improve the preservation of the edge connectivity of the clusters, which indirectly prevents the generated clusters from being disconnected.

We also observe from Figure \ref{fig:sbm/disconnect_excess} (bottom) that the median proportion of excess edges lies from around $10\%$ to around $20\%$. This indicates that SBM produces a non-trivial number of excess edges. Removing the excess edges prevents SBM from closely matching the desired degree sequence. Since our method also uses SBM in intermediate steps, this observation motivates us to include a final step of degree correction to mitigate this problem.

\section{The EC-SBM Network Simulator}
\label{sec:design-ecsbm} 

In this section, we describe our proposed generator. We give a high-level outline of the process here and provide the details in Section \ref{sec:methods-materials}.

The input is a clustered real-world network $N$ and we want to produce a synthetic network and a synthetic cluster assignment that reflects that input. 
We begin by separating the nodes into two sets: those that are in non-singleton clusters and the remaining nodes, which we refer to as the ``outliers".
This division of the nodes into two sets defines two subnetworks.
The first is
the subnetwork of $N$ induced by the nodes in the first set, and so containing only those edges that connect pairs of such nodes, is called the clustered subnetwork for the input real-world network. 
The second is the subnetwork of $N$ that contains all the outlier nodes (i.e., the nodes in the second set) and their neighbors;  the edges in this subnetwork are those that have at least one endpoint in the second set (i.e., an outlier node).
Note that every edge in $N$ is in exactly one of these two subnetworks.

We split the generation process into three stages. In the first stage, we generate the synthetic clustered subnetwork, accompanied by a synthetic cluster assignment. In the second stage, we generate the outlier subnetwork. The union of the two subnetworks will be our partially complete synthetic network. In the third stage, we add edges to match the input degree sequence. 

\paragraph{Stage 1: Generation of the synthetic clustered subnetwork}

We maintain the same cluster assignment as the empirical network in the synthetic network. Then, we compute the minimum cut size of each empirical cluster, which indicates the desired edge connectivity of each synthetic cluster. The process of the entire stage is visualized in Figure \ref{fig:chambana-stage1}.

\begin{figure}[!t]
    \centering
    \includegraphics[width=\linewidth]{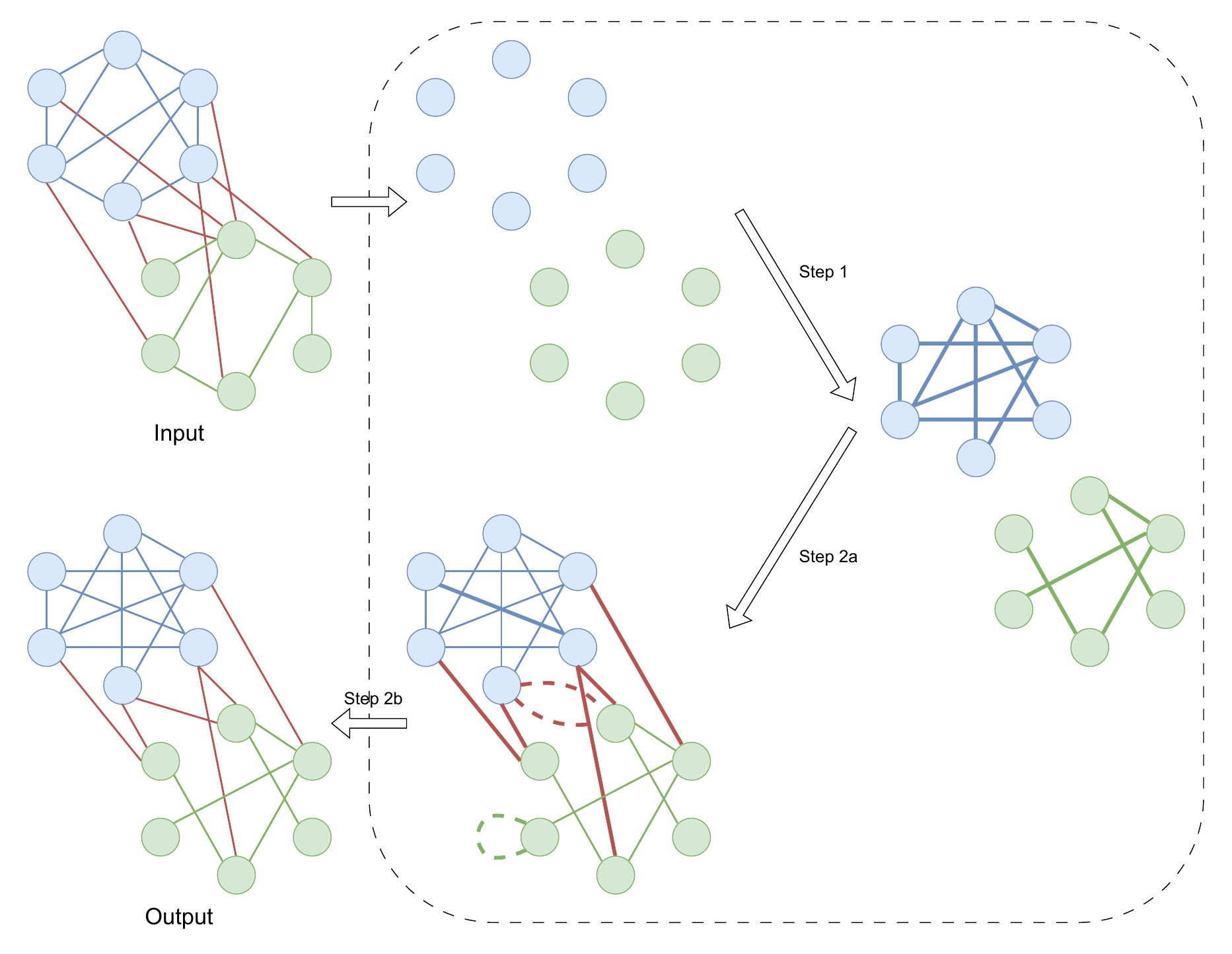}
    \caption{\textbf{Stage 1 of EC-SBM: Generation of the synthetic clustered subnetwork}. The empirical cluster assignment is maintained as the synthetic cluster assignment. In Step 1, we generate for each cluster a $k$-edge-connected subnetwork where $k$ is the desired edge connectivity of that cluster. In Step 2a, we generate the remaining edges according to the updated parameters using SBM; this can result in parallel edges and self-loops (dashed). In Step 2b, we remove the excessive edges to obtain the final output.}
    \label{fig:chambana-stage1}
\end{figure}

For each cluster $C$, to ensure $C$ has edge connectivity of at least $k$, we propose a procedure to generate a spanning subgraph on the set of vertices assigned to the cluster with edge connectivity of at least $k$ (Step 1). The procedure is illustrated in Figure \ref{fig:kec}. It involves initializing a $k$-edge-connected graph (which, to keep it simple, we use the $(k+1)$-clique) and processing the remaining vertices sequentially. Each vertex being processed will be made adjacent to $k$ previously processed vertices. The procedure will concurrently update the other input parameters to maintain the consistency described in Section \ref{sec:prior-works/sbm}, specifically the desired degrees of the vertices assigned to $C$ and the desired edge count matrix in the cell on the diagonal associated with $C$ (the cluster assignment remains the same).

\begin{figure}[!t]
    \centering
    \includegraphics[width=\linewidth]{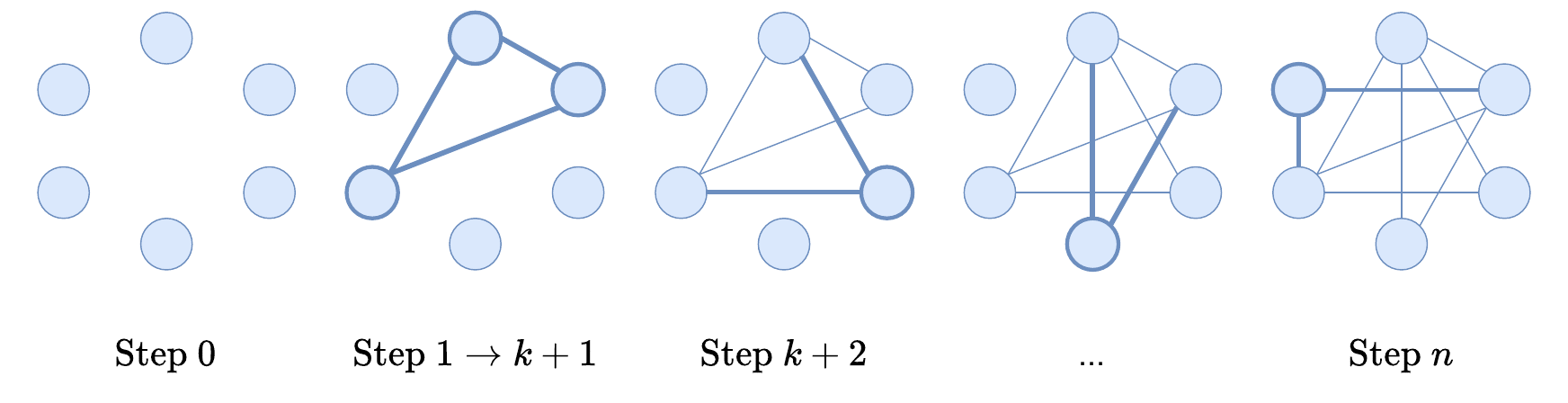}
    \caption{\textbf{Step 1 of Stage 1 of EC-SBM: Generation of a $k$-edge-connected subnetwork.} In the first $k+1$ steps, we construct a $(k+1)$-clique. From Step $k+2$ up to $n$, we process the remaining vertices sequentially by making each vertex adjacent to $k$ previously processed vertices.}
    \label{fig:kec}
\end{figure}

After processing all clusters, we generate the remaining edges using SBM with the updated parameters (Step 2a). As observed in Section \ref{sec:prior-works/sbm}, SBM will generate a multi-graph with parallel edges and self-loops. We removed the self-loops and kept only one edge for each set of parallel edges (Step 2b).

\paragraph{Stage 2: Generation of the synthetic outlier subnetwork}

 We start by considering a subnetwork of the empirical network, which is formed by removing all edges that connect two vertices that are not considered outliers. Additionally, we introduce an auxiliary cluster assignment where each outlier is assigned to its own singleton cluster, alongside the original clusters. Next, we can compute the degree sequence and edge count matrix for this described subnetwork and cluster assignment. These parameters can then be used as input for the SBM to generate a synthetic outlier subnetwork. The resulting synthetic outlier subnetwork, when combined with the synthetic clustered subnetwork from Stage 1, constitutes the partially complete synthetic network.

\paragraph{Stage 3: Degree correction}

To address the mismatch between the generated degree sequence and the desired degree sequence caused by the resolution of parallel edges and self-loops from SBM in Step 2b of Stage 1, we introduce an additional step. This step involves adding edges between vertices that have not yet achieved their desired degree. This step is the same as Stage 4 within the v1 variant of RECCS \cite{reccs}.

\section{Results and Discussion}
\label{sec:results}

\subsection{Experimental study design} 
\label{sec:exp-design}

In this section, we describe a high-level outline of our experimental study design.  

\begin{itemize} 
    \item \textbf{Experiment 1: Finding the best input clusterings for each synthetic network generator: SBM, RECCS, and EC-SBM} For each synthetic network generator, by differing the input clustering, the quality of fit of the synthetic network to the empirical network concerning various network-only statistics (independent of the clustering) changes. This experiment finds for each simulator their best input clusterings for generating the synthetic networks that best fit the corresponding empirical network.
    \item \textbf{Experiment 2: Comparing SBM, RECCS, and EC-SBM on the best input clustering found in Experiment 1} This experiment assesses EC-SBM's performance compared to prior works (SBM and RECCS), each simulator using their best input clusterings.
    \item \textbf{Experiment 3: Runtime analysis of EC-SBM} This experiment finds out how long it takes for EC-SBM to generate a synthetic network, especially when referencing a large network (millions of vertices).
\end{itemize}

The overall procedure is as follows. We take an empirical network and run it through a community detection method to obtain an empirical clustering. Using the empirical network and its respective empirical clustering, we generate a synthetic network and a respective synthetic clustering. We compute the similarity between the empirical and synthetic pair concerning various statistics to evaluate the generator's performance (detailed below). This similarity can be used to compare between different generators. Figure \ref{fig:eval_pipeline} illustrates the entire evaluation process.

\begin{figure}[!t]
    \centering
    \includegraphics[width=\linewidth]{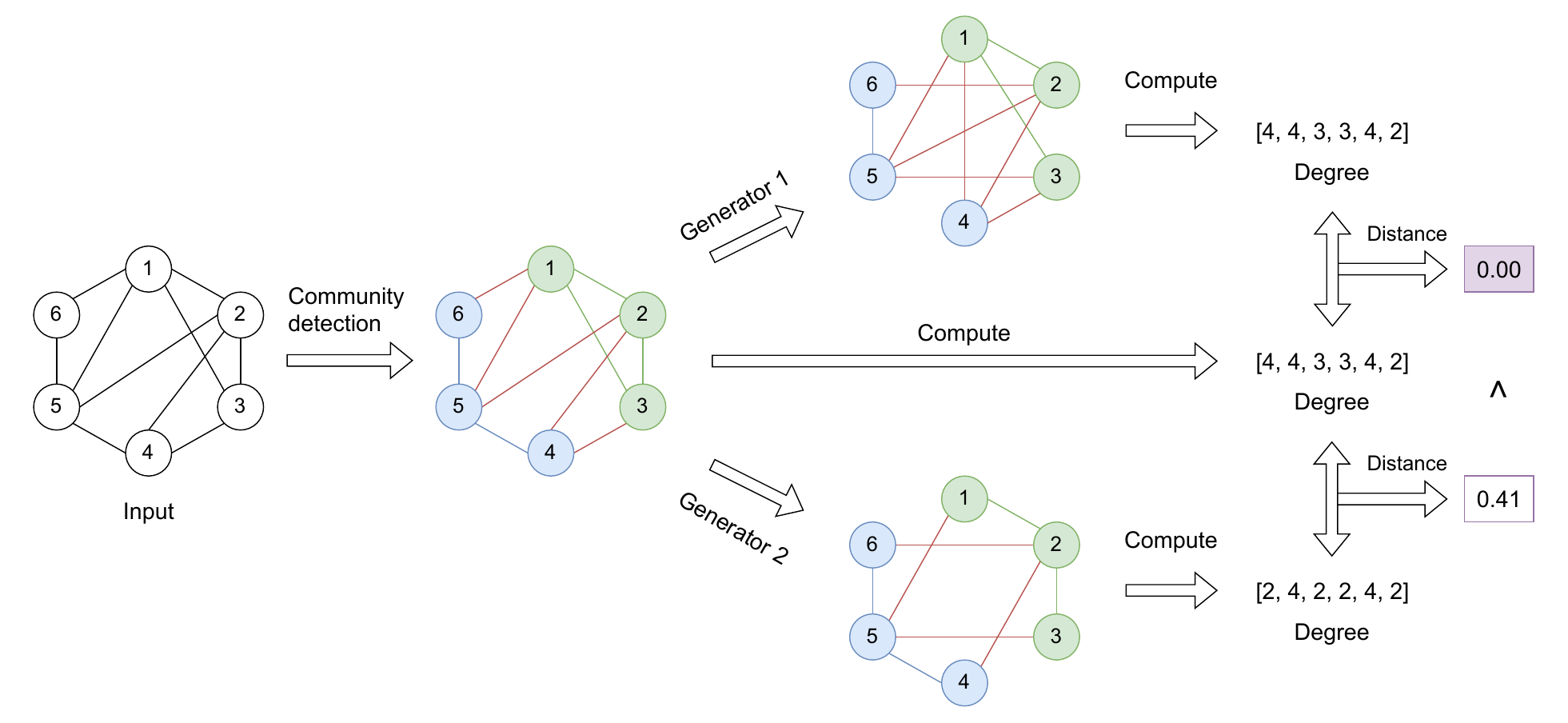}
    \caption{\textbf{The evaluation process of synthetic network generators.} An empirical clustering is obtained using a community detection method on the empirical network. Using the empirical network and clustering, the generator generates a synthetic network and clustering. Various statistics can be computed and compared between the empirical and synthetic pair. The distance computed quantifies the generator's performance and can be used to compare between generators. }
    \label{fig:eval_pipeline}
\end{figure}

The network corpus used for these experiments is described in Section \ref{sec:methods-materials/materials}. For Experiment 1 and Experiment 2, we use the $74$ small and medium networks. For Experiment 3, we use the $3$ large networks.

The community detection methods used in this paper are Leiden-CPM, Leiden-Mod, and SBM (as a community detection method). Specifically with SBM, we used the \texttt{minimize\_blockmodel\_dl} function within \texttt{graph-tool} \cite{graph-tool} as done in \cite{eval_sbm}: we ran three different models (non-degree corrected, degree corrected, and planted partition) and selected the best one (with the smallest description length). We applied post-processing to the clustering results using CM \cite{cm} for Leiden-based methods and CC (Connected Components) or WCC \cite{sbm-wcc} for SBM. These procedures refine clusters by splitting them to meet specific edge connectivity requirements, as outlined in the respective papers.

\begin{table}[!t]
\centering
\caption{Statistics to be computed and compared between the synthetic and the empirical network and cluster assignment. (*) are statistics that do not depend on the cluster assignment.}
\label{tab:statistics}
\begin{tabular}{|l|l|l|}
\hline
\multicolumn{1}{|c|}{\textbf{Statistic}} & \multicolumn{1}{c|}{\textbf{Type and range of value(s)}} & \multicolumn{1}{c|}{\textbf{Description}} \\ \hline
\texttt{pseudo\_diameter} (*) & Scalar $(0, \infty)$ & Approximate diameter \\ \hline
\texttt{char\_time} (*) & Scalar $(0, \infty)$ & Characteristic time of a random walk \\ \hline
\texttt{global\_ccoeff} (*) & Scalar $[0, 1]$ & Global clustering coefficient \\ \hline
\texttt{degree} (*) & Sequence $[0, \infty)$ & Degree of the vertices \\ \hline
\texttt{mixing\_mus} & Sequence $[0, 1]$ & Local mixing parameter of each vertex \\ \hline
\texttt{mincuts} & Sequence $[0, \infty)$ & Edge connectivity of each cluster \\ \hline
\texttt{c\_edge} & Sequence $[0, \infty)$ & Number of edges inside each cluster \\ \hline
\texttt{o\_deg} & Sequence $[0, \infty)$ & Degree of the outliers \\ \hline
\end{tabular}
\end{table}

Table \ref{tab:statistics} lists the statistics of interest. For the positive scalar (\texttt{pseudo\_diameter} and \texttt{char\_time}), the distance used is the signed difference (SD), calculated as $x - x'$, where $x$ represents the value from the empirical network and $x'$ represents the value from the synthetic network. For \texttt{global\_ccoeff}, the distance used is the signed relative difference (SRD), calculated as $(x - x')/x$, where $x$ represents the value from the empirical network and $x'$ represents the value from the synthetic network. Note that in both cases, a negative distance value indicates that the statistic computed on the synthetic pair is higher than that of the empirical pair. For the rest of the statistics, since they are sequences, the distance used is the root mean square error (RMSE), calculated as $\sqrt{\frac{1}{n} \sum_{i=1}^n (x_i - x'_i)^2}$, where $x$ represents the sequence from the empirical network and cluster assignment and $x'$ represents the value from the synthetic network and cluster assignment. The RMSE value is always non-negative.

In Experiment 1, we look at network-only statistics (the four statistics with (*) in Table \ref{tab:statistics}). In \cite{eval_sbm}, the authors conclude that the important properties to look at are the pseudo-diameter, the characteristic time of a random walk, and the clustering coefficient; thus, these are the criteria we will focus on. \cite{eval_sbm} also suggested the number of edges; we refine the idea by looking at the degree sequence of each vertex.  In Experiment 2, we look at all criteria. The additional cluster-specific statistics are those studied in RECCS \cite{reccs}. 

\subsection{Results of Experiment 1}

Figure \ref{fig:comp/val/per_sim/leiden_leidencm_sbm} illustrates the effect of using different input clustering on the similarity between the empirical and synthetic networks generated by SBM, RECCS, and EC-SBM. 
Considering each synthetic network generation method in turn, we see the following trends.

For SBM  (Figure \ref{fig:comp/val/per_sim/leiden_leidencm_sbm} top panel), using SBM+WCC as an input clustering is either best (e.g., pseudo-diameter, characteristic time, and degree)  or close to best (global clustering coefficient) for each criterion.
RECCS shows a different pattern (Figure \ref{fig:comp/val/per_sim/leiden_leidencm_sbm} middle panel), with SBM+WCC and several other input clusterings providing good accuracy for pseudo diameter and global clustering coefficient. For the characteristic time, SBM+WCC is clearly best, but for the degree, SBM+WCC and SBM+CC are worst.
However, Leiden-Mod+CM provides very good accuracy on all criteria compared to the other input clustering methods.
Therefore, for RECCS, we pick Leiden-Mod+CM as a good input clustering.
Finally, for EC-SBM (Figure \ref{fig:comp/val/per_sim/leiden_leidencm_sbm} bottom panel),
SBM+WCC is either best or close to best for the other criteria.
 
Taking these trends into consideration, we select SBM+WCC and Leiden-Mod+CM as two clusterings for use in providing parameters to these three simulators. 

\begin{figure}[!t]
    \includegraphics[width=\linewidth]{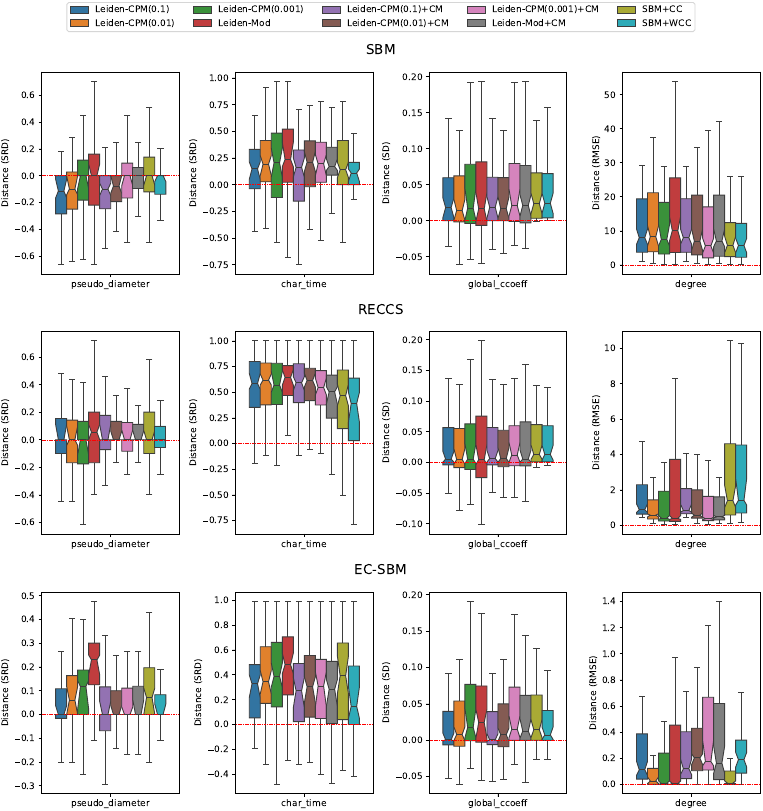}
    \caption{\textbf{Impact of input clustering on the similarity between the clustered input network and synthetic network}. The comparison is
    done on 74 networks with respect to 4 different statistics (see Table \ref{tab:statistics}). A distance value closer to $0.0$ is preferred. For each simulator, there is no single best input clustering across all statistics. For SBM and EC-SBM, SBM+WCC has an advantage when we look at the pseudo-diameter and the characteristic time. For RECCS, SBM+WCC and Leiden-Mod+CM are comparable, with the former performing well on the characteristic time and the latter performing well on the others. 
    }
    \label{fig:comp/val/per_sim/leiden_leidencm_sbm}
\end{figure}

\subsection{Results of Experiment 2}

Figure \ref{fig:comp/val/per_clustering/sbm_reccs_sbmmcs+e/network} and Figure \ref{fig:comp/val/per_clustering/sbm_reccs_sbmmcs+e/cluster} compares the performance of SBM, RECCS, and EC-SBM using the two best input clusterings, SBM+WCC and Leiden-Mod+CM, across $4$ network-only criteria (Figure \ref{fig:comp/val/per_clustering/sbm_reccs_sbmmcs+e/network}) and $4$ cluster-specific criteria (Figure \ref{fig:comp/val/per_clustering/sbm_reccs_sbmmcs+e/cluster}).

SBM using either input clustering performs poorly on most criteria, except for being the best at the characteristic time and comparable to RECCS at the outlier degree sequence. On the other hand, EC-SBM using SBM+WCC input clustering is a well-rounded pipeline with a strong performance in most statistics. Most notably, it outperforms RECCS in terms of both the degree sequence and the outlier degree sequence.

However, EC-SBM using SBM+WCC input clustering is not the best. Regarding the characteristic time, EC-SBM using SBM+WCC input clustering is only the second best after SBM using SBM+WCC. Regarding the global coefficient, RECCS using Leiden-Mod+CM has the lowest median; however, EC-SBM using SBM+WCC input clustering is only marginally worse and has less variance. Regarding edge connectivity of the clusters, for either the SBM+WCC or Leiden-Mod+CM input clustering, RECCS is the best, followed by EC-SBM.

\begin{figure}[t]
    \centering
    \includegraphics[width=\linewidth]{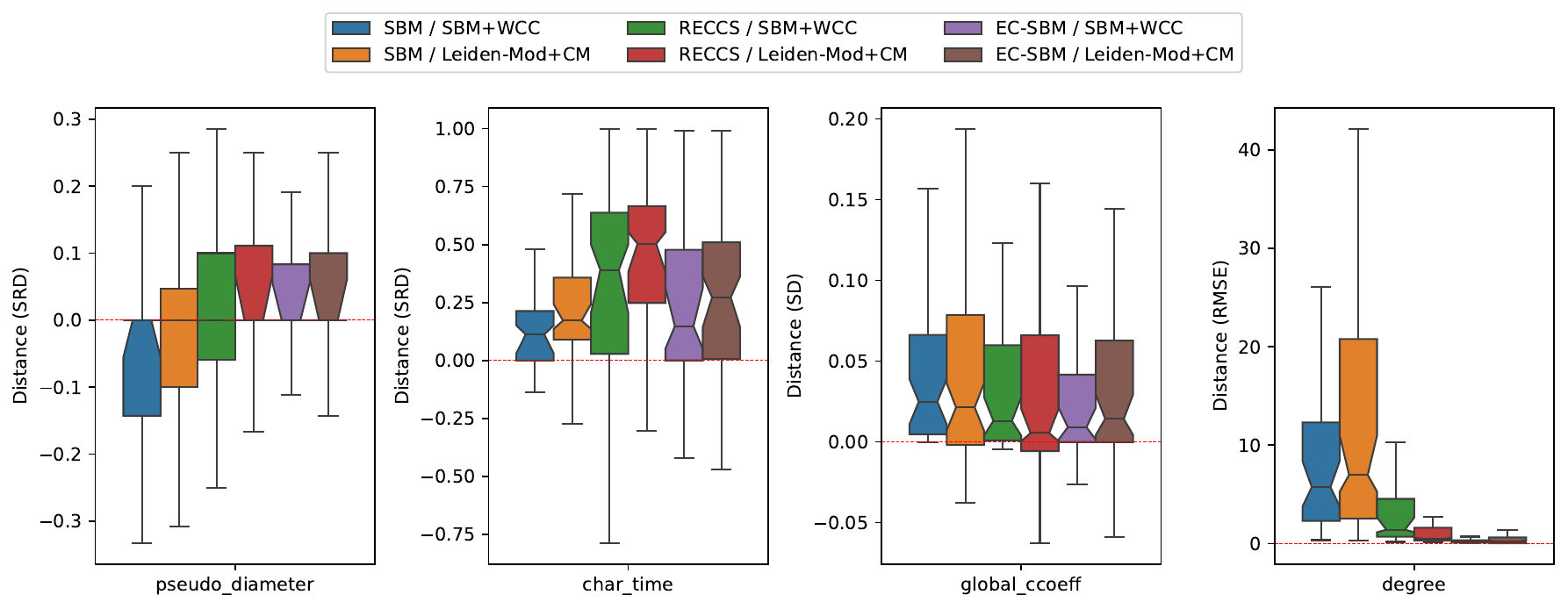}
    \caption{\textbf{Comparison between three simulators using two input clusterings on network-only criteria.} The simulators are SBM, RECCS, and EC-SBM. The input clusterings are SBM+WCC and Leiden-Mod+CM, which we determine to be the most suitable in Experiment 1. The comparison is done on $74$ networks with respect to $4$ network-only criteria. 
    }
    \label{fig:comp/val/per_clustering/sbm_reccs_sbmmcs+e/network}
\end{figure}

\begin{figure}[t]
    \centering
    \includegraphics[width=\linewidth]{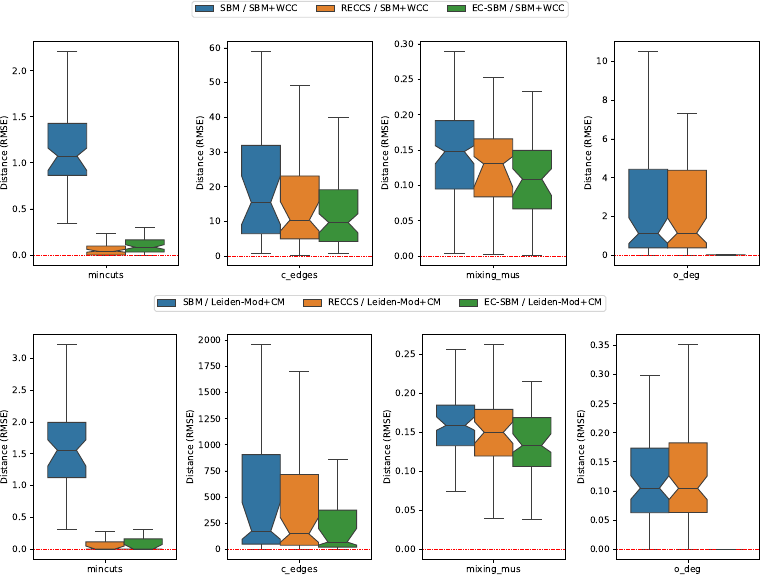}
    \caption{\textbf{Comparison of three simulators and two input clusterings on cluster-specific criteria.} The simulators are SBM, RECCS, and EC-SBM. The input clusterings are SBM+WCC (top) and Leiden-Mod+CM (bottom), which we determine to be the most suitable in Experiment 1. The comparison is done on $74$ networks with respect to $4$ cluster-specific criteria. We group the pipelines according to their input clustering. The distance values for \texttt{o\_deg} of EC-SBM using SBM+WCC and Leiden-Mod+CM input clustering are very close to $0.0$ in this figure.  
    }
    \label{fig:comp/val/per_clustering/sbm_reccs_sbmmcs+e/cluster}
\end{figure}

\clearpage

\subsection{Results of Experiment 3}

Table \ref{tab:time/large} shows the runtime analysis of the generation process of EC-SBM and RECCS using these networks as reference networks and SBM+WCC input clustering. For each simulator, we measure the time from getting the empirical network and clustering to generating a synthetic network and clustering. For SBM, this includes the generation described in Section \ref{sec:prior-works/sbm}. For RECCS, this includes generating the initial synthetic network using SBM and the post-processing steps as described in \cite{reccs}. For EC-SBM, this includes the three-stage generation described in Section \ref{sec:design-ecsbm}.

\begin{table}[!ht]
\centering
\caption{\textbf{Runtime analysis of the generation process of SBM, RECCS, and EC-SBM.}}
\label{tab:time/large}
\begin{tabular}{|l|r|r|r|}
\hline
 & \multicolumn{1}{c|}{\texttt{CEN}} & \multicolumn{1}{c|}{\texttt{orkut}} & \multicolumn{1}{c|}{\texttt{livejournal}} \\ \hline
\textbf{SBM (hours)} & $\approx$ 1.17 & $\approx$ 1.44 & $\approx$ 0.52 \\ \hline
\textbf{RECCS (hours)} & $\approx$ 1.62 & $\approx$ 3.50 & $\approx$ 2.55 \\ \hline
\textbf{EC-SBM (hours)} & $\approx$ 5.00 & $\approx$ 3.77 & $\approx$ 1.47 \\ \hline
\hline
\textbf{Number of nodes} & 14.0M & 3.1M & 4.9M \\ \hline
\textbf{Number of outliers} & 12.8M & 0.6M & 2.0M \\ \hline
\textbf{Number of edges} & 92.1M & 117.2M & 42.9M \\ \hline
\textbf{Number of clusters} & 12K & 31K & 134K \\ \hline
\end{tabular}
The comparison is done using large networks and their corresponding SBM+WCC input clustering. Here, the generation process starts with being given the empirical network and clustering and ends with producing the synthetic network and clustering. The statistics of the network and clustering are also included.
\end{table}

Overall, the generation process is fast, even on large networks. The unmodified SBM is the fastest.  Between RECCS and EC-SBM, the absolute and relative runtime depends on the input network and clustering. RECCS is faster for \texttt{CEN} but slower for \texttt{livejournal}; they are similar for \texttt{orkut}, with RECCS being faster.

For reference, while it took roughly $5$ hours to generate a synthetic network based on \texttt{CEN} and SBM+WCC input clustering, it took $158$ hours to compute the SBM clustering for \texttt{CEN} ($38$ hours for the degree-corrected model, $66$ hours for the non-degree-corrected model, and $54$ hours for the planted partition models) and an additional $1.5$ hours to compute the WCC processing on top of it \cite{sbm-wcc}.

\subsection{Summary}

We find SBM+WCC and Leiden-Mod+CM to be the two best clustering methods to obtain the input clustering for SBM, RECCS, and EC-SBM for simulating network-only statistics. 
For these input clusterings, of the three network simulators we study, EC-SBM using SBM+WCC is overall the best at coming close to the clustered real-world network-only and cluster-specific statistics. Finally, we find that EC-SBM can generate large networks (with millions of nodes) using the SBM+WCC input clustering.

Although SBM+WCC provides overall the closest results, depending on the user's needs, one of the other simulators may be better suited. 
If a faster simulator is required, SBM may be preferable despite its lower accuracy in reproducing network statistics and clustering statistics. 
RECCS and EC-SBM each have aspects in which one is better than the other, so the choice between them can depend on what is most important to the user.
For example, RECCS is slightly better than EC-SBM at providing a close fit for the edge connectivity of the clusters, while EC-SBM is better for the mixing parameter, the characteristic time of a random walk, and the degree sequence. 
Finally, the runtimes of RECCS and EC-SBM are close, with an advantage to RECCS. 
Thus, the choice between RECCS and EC-SBM depends on the relative importance of runtime and edge-connectivity compared to the other network and clustering statistics. Finally, using \texttt{graph-tool} for SBM construction is the fastest of these synthetic network generation methods; hence, if runtime is of paramount importance, then \texttt{graph-tool} should be preferred.

\section{Conclusions}
\label{sec:conclusions}

Synthetic networks with ground truth clusterings are needed for multiple purposes, including the evaluation of community detection and community search methods. 
Stochastic Block Models (SBMs) are established models for this problem, and prior work \cite{eval_sbm} has shown that SBMs generated using \texttt{graph-tool} \cite{graph-tool} are able to reproduce with high accuracy many network features of real-world networks.
Both \cite{reccs} and our study show that the ground truth clusters produced using SBMs are often internally disconnected, which is undesirable.  

Motivated by the value of having synthetic networks with quality ground truth clusters, we presented EC-SBM, a new simulator tool that is designed to improve upon the utility of software for SBMs for generating synthetic networks.
We proposed EC-SBM, a simulator that takes as input parameters about a clustered network and generates a synthetic network that aims to approximate the features of that input clustered network. We designed EC-SBM by focusing on simulating the edge connectivity of the clusters. Additionally, we incorporated a post-processing step to align the degree sequence of the synthetic network with that of the empirical network. 

Our empirical analyses show that both EC-SBM and RECCS are better than SBMs for the cluster edge connectivity and the degree sequence while coming close in the other criteria.  We also show that overall, EC-SBM is superior to RECCS: more accurate for several parameters, such as the characteristic time and degree sequence, and only slightly less accurate than RECCS for cluster edge connectivity.

As EC-SBM is built on the implementation of SBM in \texttt{graph-tool} \cite{graph-tool}, we have not explored other SBM implementations or alternative synthetic network generators such as nPSO \cite{npso} or LFR \cite{lfr}/ABCD \cite{abcd} in this paper. In future work, we will explore analyzing and potentially extending the idea of EC-SBM to other generators to compare their performances.

Furthermore, as this paper focuses mainly on generating realistic synthetic networks, we have not studied the performance of clustering methods on the generated synthetic networks;  this is an important direction for future work.
While we have verified that EC-SBM can construct synthetic networks with millions of nodes, we have not evaluated the limits to this scalability. 
Given the increasing interest in understanding community structure in very large networks, with potentially billions of nodes, this is also an important direction for future work.

\section{Materials and Methods}
\label{sec:methods-materials}

\subsection{Materials}
\label{sec:methods-materials/materials}

The corpus of networks studied in this paper contains $74$ small and medium-sized networks (from approximately $1,000$ to approximately $1.4$ million vertices) and $3$ large networks (more than $3$ million vertices). Most of the networks are sourced from \cite{peixoto-db}, except for the Curated Exosome Network \texttt{CEN} and \texttt{orkut}, which are used in \cite{cm} and available from the publication. 

We restrict the study to non-bipartite networks because the community detection methods involved are not designed to work on bipartite graphs.

The $74$ small and medium-sized networks: \texttt{academia\_edu}, \texttt{advogato}, \texttt{anybeat}, \texttt{at\_migrations}, \texttt{berkstan\_web}, \texttt{bible\_nouns}, \texttt{bitcoin\_alpha}, \texttt{bitcoin\_trust}, \texttt{chess}, \texttt{chicago\_road}, \texttt{citeseer}, \texttt{collins\_yeast}, \texttt{cora}, \texttt{dblp\_cite}, \texttt{dblp\_coauthor\_snap}, \texttt{dnc}, \texttt{douban}, \texttt{drosophila\_flybi}, \texttt{elec}, \texttt{email\_enron}, \texttt{email\_eu}, \texttt{epinions\_trust}, \texttt{faa\_routes}, \texttt{facebook\_wall}, \texttt{fediverse}, \texttt{fly\_hemibrain}, \texttt{fly\_larva}, \texttt{foldoc}, \texttt{google}, \texttt{google\_plus}, \texttt{google\_web}, \texttt{hyves}, \texttt{inploid}, \texttt{interactome\_figeys}, \texttt{interactome\_stelzl}, \texttt{interactome\_vidal}, \texttt{internet\_as}, \texttt{jdk}, \texttt{jung}, \texttt{lastfm\_aminer}, \texttt{libimseti}, \texttt{linux}, \texttt{livemocha}, \texttt{lkml\_reply}, \texttt{marker\_cafe}, \texttt{marvel\_universe}, \texttt{myspace\_aminer}, \texttt{netscience}, \texttt{new\_zealand\_collab}, \texttt{notre\_dame\_web}, \texttt{openflights}, \texttt{petster}, \texttt{pgp\_strong}, \texttt{polblogs}, \texttt{power}, \texttt{prosper}, \texttt{python\_dependency}, \texttt{reactome}, \texttt{slashdot\_threads}, \texttt{slashdot\_zoo}, \texttt{sp\_infectious}, \texttt{stanford\_web}, \texttt{topology}, \texttt{twitter}, \texttt{twitter\_15m}, \texttt{uni\_email}, \texttt{us\_air\_traffic}, \texttt{wiki\_link\_dyn}, \texttt{wiki\_rfa}, \texttt{wiki\_users}, \texttt{wikiconflict}, \texttt{word\_assoc}, \texttt{wordnet}, \texttt{yahoo\_ads}. 

The $3$ large networks: \texttt{CEN}, \texttt{livejournal}, \texttt{orkut}.

\subsection{The EC-SBM method}
\label{sec:methods-materials/method}

In this section, we describe our proposed simulator in more detail and more formally. The three-stage process is described in Section \ref{sec:methods-materials/method/three-stages} with the first two stages being described in Section \ref{sec:methods-materials/method/stage1} and \ref{sec:methods-materials/method/stage2}. The third stage is identical to Stage IV of Step 1 of RECCS version 1 \cite{reccs}.

\subsubsection{A three-stage generation of the synthetic network}
\label{sec:methods-materials/method/three-stages}

We are given an empirical network $G$ with $n$ vertices and an empirical cluster assignment $\mathcal{C}$ with $m$ clusters. Some vertices may not belong to any cluster in $\mathcal{C}$; these are considered outliers. We define $G_c$ as the subnetwork induced by the vertices that are not considered outliers and we refer to this as the clustered subnetwork. On the other hand, let $G_o$ be the subnetwork formed by removing all edges of $G_c$ from the empirical network $G$ and call this the outlier subnetwork. 

In the first stage, we generate a synthetic version of the clustered subnetwork, denoted by $G_c^\prime$, and a synthetic clustering assignment $\mathcal{C}^\prime$ which simulates $G_c$ and $\mathcal{C}$. In the second stage, we generate a synthetic outlier subnetwork $G_o^\prime$ to simulate $G_o$. The partially complete synthetic network is formed by combining these two: $\tilde{G}^\prime = G_c^\prime \cup G_o^\prime$. In the third stage, we add edges to $\tilde{G}^\prime$ so that the degree of each vertex matches that of the corresponding vertex in $G$. This process yields our final synthetic network, $G^\prime$.

\subsubsection{Generation of the clustered subnetwork}
\label{sec:methods-materials/method/stage1}

In this section, we describe our proposed method for generating a synthetic clustered subnetwork.

Firstly, we keep the same cluster assignment in the synthetic network as in the empirical network, meaning the synthetic cluster assignment $\mathcal{C}^\prime$ is equal to $\mathcal{C}$.

Secondly, we compute the required degree sequence $d$ (a vector of size $n$) and the required edge count matrix $e$ (a matrix of size $m \times m$) from $G_c$. Additionally, we compute the edge connectivity sequence $\lambda$, which is a vector of size $m$ representing the edge connectivity of the clusters in the empirical pair $(G_c, \mathcal{C})$. This sequence indicates the desired edge connectivity for each cluster. Specifically, we aim to generate a synthetic clustered subnetwork $G^\prime_c$ such that $\mathcal{C}^\prime$ accompanies $G_c^\prime$, and for all clusters $C^\prime_i \in \mathcal{C}^\prime$, the subnetwork $G^\prime_c(C^\prime_i)$ is $\lambda_i$-edge-connected.

We propose a procedure to generate a spanning subnetwork from a set of vertices, ensuring that each cluster has an edge connectivity of at least a specified value.

\paragraph{Generation of a $k$-edge-connected spanning subnetwork} 

Given a set of vertices $V$ of size $n$ and an integer $k < n$, we can generate a $k$-edge-connected spanning subnetwork on $V$ via the following procedure, called \texttt{GEN-KECSSN}$(V, k)$.
\begin{enumerate}
    \item We arbitrarily label the vertices $v_1, \dots, v_n$.
    \item We initialize with a $(k+1)$-clique, denoted by $G_{k+1} = (V_{k+1}, E_{k + 1})$. Specifically, we sequentially add vertices $v_1, \dots, v_{k+1}$ to an empty graph, each vertex being added is connected to all previously added vertices. In other words, $V_{k+1} = \{ v_1, \dots v_{k + 1} \}$ and $E_{k+1} = \{ (v_i, v_j): 1 \leq i < j \leq k + 1 \}$.
    \item At each step, we add a new vertex to the graph and connect it with $k$ previously added vertices. More formally, for $i = k+2, \dots, n$, let $G_i = (V_i, E_i)$ where $V_i = V_{i-1} \cup \{ v_{i} \}$ and $E_i = E_{i-1} \cup \{ (v_i, v): v \in S_i \}$ where $S_i$ is a set of $k$ arbitrarily chosen vertices from $V_{i-1}$ (thus, $S_i \subset V_{i-1}, |S| = k$).
    \item Output $G_n$.
\end{enumerate}

The described procedure has several flexible components that can be explored for improvement. For initialization, any $k$-edge-connected graph can be used instead of a $(k+1)$-clique. We specifically chose the $(k + 1)$-clique as a straightforward starting $k$-edge-connected graph. Additionally, the vertices can be ordered in various ways; we specifically choose to process them in decreasing order of their degree. The method for selecting neighbors while processing a vertex also allows for flexibility. In our implementation, we randomly select the vertices with probabilities proportional to their availability, where the availability of a vertex at a given step is determined by how many more edges it can form to reach its desired degree.

With the following result, we show that this procedure will give a $k$-edge-connected graph as desired.

\begin{theorem}
    Let $\lambda(G_n)$ be the minimum edge-cut size of $G_n$. Then, $\lambda(G_n) \geq k$.
\end{theorem}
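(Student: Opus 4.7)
The plan is to prove the theorem by induction on the index $i$ of $G_i$, showing that $\lambda(G_i) \geq k$ for every $i \in \{k+1, k+2, \ldots, n\}$; the $i=n$ case is the desired statement.

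For the base case, $G_{k+1}$ is by construction the complete graph $K_{k+1}$, whose edge connectivity is well known to be $k$, so $\lambda(G_{k+1}) \geq k$ holds immediately.

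For the inductive step, suppose $\lambda(G_{i-1}) \geq k$ and consider an arbitrary nontrivial vertex partition $(A, B)$ of $V_i$ with $A, B \neq \emptyset$. I would bound $|E(A, B)|$, the number of edges of $G_i$ crossing this partition, from below by $k$. Without loss of generality $v_i \in A$. I would split into two cases. First, if $A = \{v_i\}$, then the edges crossing the cut are exactly the edges incident to $v_i$ in $G_i$, and by the construction in Step 3 there are exactly $k$ such edges, giving $|E(A,B)| = k$. Second, if $|A| \geq 2$, then $A' := A \setminus \{v_i\}$ and $B$ form a nontrivial partition of $V_{i-1}$ (both nonempty), so the edges of $G_{i-1}$ between $A'$ and $B$ constitute an edge cut of $G_{i-1}$, and by the inductive hypothesis their number is at least $k$. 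Since every such edge is also an edge of $G_i$ (the edge set only grows with $i$) and crosses $(A, B)$, we get $|E(A, B)| \geq k$. Taking the minimum over all partitions yields $\lambda(G_i) \geq k$, completing the induction.

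The argument is essentially a textbook inductive edge-connectivity proof for graphs built by attaching each new vertex with $k$ edges to an already $k$-edge-connected subgraph, so I do not expect a genuine obstacle. The only step requiring a moment of care is verifying that in the second case $A'$ and $B$ really do form a valid cut of $G_{i-1}$ (in particular, that $B \subseteq V_{i-1}$ and $A' \neq \emptyset$), so that the inductive hypothesis can be invoked; this follows directly from $v_i \in A$ and $|A| \geq 2$. Everything else reduces to counting edges incident to $v_i$ and observing monotonicity of the edge set under the construction.
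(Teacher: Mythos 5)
Your proof is correct, and it takes a genuinely different route from the paper's. You argue by induction on the construction: the base case is the known fact that $\lambda(K_{k+1}) = k$, and the inductive step shows that attaching a new vertex by $k$ edges to a $k$-edge-connected graph cannot create a cut of size less than $k$, by splitting on whether the new vertex is alone on its side of the cut. The paper instead fixes a single minimum cut $(S,T)$ of the final graph $G_n$ and argues directly: if both sides meet the initial $(k+1)$-clique, the clique already contributes $d(k-d+1) \geq k$ crossing edges; otherwise it identifies the earliest-added vertex $v_t$ lying in $T$ and observes that all $k$ of its insertion-time neighbors lie in $S$, so its $k$ attachment edges all cross the cut. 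The two arguments encode the same monotonicity of the construction — your inductive hypothesis plays the role of the paper's "earliest vertex in $T$" trick — but yours is the more modular, textbook-style argument: it establishes the stronger intermediate claim $\lambda(G_i) \geq k$ for every $i$, and it lets you cite the edge connectivity of the complete graph rather than redoing the $d(k-d+1) \geq k$ computation. The paper's version is self-contained and avoids induction at the cost of a slightly more delicate case analysis. Your one point of care — checking that $A' = A \setminus \{v_i\}$ and $B$ form a valid nontrivial cut of $G_{i-1}$ — is handled correctly, since $v_i \in A$ forces $B \subseteq V_{i-1}$ and $|A| \geq 2$ forces $A' \neq \emptyset$.
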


\begin{proof}
Let $(S, T)$ be the minimum edge-cut of $G_n = (V_n, E_n)$. Recall that $v_1, \dots, v_n$ is the order in which the vertices are added and let $V_i = \{v_1, \dots, v_i\}$ for $1 \leq i \leq n$. 

Recall that the first $k + 1$ vertices are made into a $(k + 1)$-clique denoted $G_{k+1}$. Let $d = |\{ 1 \leq j \leq k + 1: v_j \in T \}|$, i.e., the number of vertices from the initial clique belonging to $T$. Without loss of generality, let $T$ have the smaller number of vertices in the $(k+1)$-clique. Then, $d \leq k + 1$ and $|V_{k+1} \cap S| = k - d + 1$.

\underline{Case 1}: $d > 0$. WLOG, let $v_1, \dots, v_{k-d+1} \in S$ and $v_{k-d+2}, \dots, v_{k+1} \in T$.

Then $E$ contains $\{ (v_i, v_j): 1 \leq i \leq k - d + 1, k - d + 2 \leq j \leq  k + 1\} =: C$. Consequently, the cutset of $(S, T)$ contains $C$ since $v_1, \dots, v_{k-d+1} \in S$ and $v_{k-d+2}, \dots, v_{k+1} \in T$. $|C| = d(k - d + 1) \geq k$ because $d(k-d+1) - k = (k - d)(d - 1) \geq 0$.

Thus, $\lambda(G) \geq |C| \geq k$.

\underline{Case 2}: $d = 0$. Then $V_{k + 1} \subset S$.

Let $t = \min \{ s: v_s \in T \}$, i.e, $v_t$ is the earliest vertex added to $T$. Let $U_t = \{ u: (u, v_t) \in E, u \in V_{t-1} \} \subset S$, i.e, $U_t$ is the set of $k$ vertices that are made adjacent to $v_t$ when $v_t$ is being added. 

Then $E$ contains $\{ (u, v_t) : u \in U \} =: C'$. Consequently, the cutset of $(S, T)$ contains $C'$ since $u \in U \subset S$ and $v_t \in T$. 

Thus, $\lambda(G) \geq |C'| = |U| = k$.
\end{proof}

In Step 1, we process each cluster $C_i^\prime \in \mathcal{C}^\prime$ sequentially and independently. Let $G^{\prime \text{mcs}}_{C_i^\prime} = \text{\texttt{GEN-KECSSN}}(C_i^\prime, \lambda_i)$, i.e., we generate a spanning subnetwork on $C_i^\prime$ that is guaranteed to be $\lambda_i$-edge-connected.

As we generate the spanning subnetwork for a cluster $C_l, l \in \{1, \dots, m\}$, we must correspondingly decrease the required degree of each vertex of $C_i$ (i.e., $d_i$ for each $i \in C_l$) and the required edge count inside $C$ (i.e., $e_{l,l}$). In other words, when we create an edge between vertex $i$ and vertex $j$ in for the spanning subnetwork of cluster $C_l$, we decrease $d_i$ and $d_j$ by one each and decrease $e_{l, l}$ by two. There are cases where these decreases lead to a negative value for either the degree or the edge count. In such cases, we reverse the decrease by increasing $d_i$ and $d_j$ by one each and increasing $e_{l, l}$ by two.

After processing all clusters, we take the union of all the synthetic subnetworks to get a synthetic spanning subnetwork, i.e., $G_c^{\prime \text{mcs}} = \bigcup_{C^\prime_i \in \mathcal{C}} G^{\prime \text{mcs}}_{C_i^\prime}$.

 In Step 2a, using SBM with the updated input parameters, we generate a multi-graph $\tilde{G}_c^{\prime \text{base}}$. In Step 2b, we simplify $\tilde{G}_c^{\prime \text{base}}$ by removing all the self-loops and removing all excessive parallel edges (leaving one edge for each set of parallel edges between a pair of vertices) to obtain the synthetic subnetwork $G_c^{\prime \text{base}}$.
 
 We take $G^\prime_c = G_c^{\prime \text{mcs}} \cup G_c^{\prime \text{base}}$ as our synthetic clustered subnetwork. Note that since it is a union, the clusters of $G_c^\prime$ will maintain their connectivity irrespective of $G_c^{\prime \text{base}}$.

\subsubsection{Generation of the outlier subnetwork}
\label{sec:methods-materials/method/stage2}

In this section, we describe the process of generating the synthetic outlier subnetwork.

Let $V_o$ be the set of outliers and $E_o$ be the set of edges in $G_o$. Hence, $E_o$ only contains the edges where at least one of the ends is an outlier. Let $\bar{\mathcal{C}} = \mathcal{C} \cup \left( \cup_{v \in V_o} \{ \{ v \} \} \right)$ be the auxiliary cluster assignment for the outlier subnetwork. Then, $\bar{\mathcal{C}}$ accompanies $G$ by treating each outlier as a singleton cluster. 

From $G_o$ and $\bar{\mathcal{C}}$, we can compute the input parameters to SBM. We use SBM to generate a synthetic multi-graph and resolve the parallel edges and self-loops. We take the resulting network as our synthetic outlier subnetwork $G^\prime_o$.

\section{Declarations}
\subsection{Availability of data and materials} 

The real-world networks on which the analyses are based are already in the public domain. 
The EC-SBM software used is in the public domain at \url{https://github.com/illinois-or-research-analytics/ec-sbm}. The commands used to generate the data are in this paper.

\subsection{Competing interests}
The authors declare that they have no competing interests.

\subsection{Funding}
This work was funded in part by the Insper-Illinois collaboration grant to TW and GC.  

\subsection{Authors' contributions}
TW proposed the basic approach, supervised the research, evaluated experimental results, and edited the manuscript.
TVL designed the algorithm, developed the open-source software, established theoretical properties, performed the simulation study evaluating EC-SBM in comparison to other methods, evaluated results, and wrote the first draft of the paper.
LA provided RECCS synthetic networks.
GC proposed the problem, supervised the research, evaluated experimental results, and edited the manuscript.

\subsection{Acknowledgments}
The authors thank the members of the Warnow-Chacko lab for helpful feedback.

\bibliography{ref}

\end{document}